\renewcommand{\H}{\ensuremath{\mathbb{H}}\xspace}
\renewcommand{\R}{\ensuremath{\mathbb{R}}\xspace}
\renewcommand{\D}{\ensuremath{\mathcal{D}}\xspace}
\renewcommand{\A}{\ensuremath{\mathcal{A}}\xspace}
\newcommand{\ER}{\ensuremath{\exists\R}\xspace}
\newcommand{\problemname}[1]{{\normalfont\textsc{#1}}\xspace}
\newcommand{\graphclass}[1]{{\normalfont\textsf{#1}}\xspace}
\newcommand{\ETR}{\problemname{ETR}}
\newcommand{\Stretchability}{\problemname{Stretchability}}
\newcommand{\SimpleStretchability}{\problemname{SimpleStretchability}}
\newcommand{\Poincare}{Poincar{\'{e}}\xspace}
\DeclareMathOperator{\arccosh}{arcosh}
\newcommand{\abs}[1]{\lvert #1 \rvert}
\title{Recognizing Unit Disk Graphs in Hyperbolic Geometry is \texorpdfstring{$\boldsymbol{\exists\mathbb{R}}$}{ER}-Complete}
\author{Nicholas Bieker}{Karlsruhe Institute of Technology, Germany}{bieker.nicholas@gmail.com}{}{}
\author{Thomas Bl\"{a}sius}{Karlsruhe Institute of Technology, Germany}{thomas.blaesius@kit.edu}{}{}
\author{Emil Dohse}{Karlsruhe Institute of Technology, Germany}{emildohse@gmail.com}{}{}
\author{Paul Jungeblut}{Karlsruhe Institute of Technology, Germany}{paul.jungeblut@kit.edu}{0000-0001-8241-2102}{}
\authorrunning{N. Bieker, T. Bl{\"{a}}sius, E. Dohse and P. Jungeblut}
\keywords{Unit disk graphs, Hyperbolic geometry, Existential theory of the reals}
\begin{document}

\maketitle

\begin{abstract}
    A graph~$G$ is a (Euclidean) unit disk graph if it is the intersection graph of unit disks in the Euclidean plane~$\R^2$.
    Recognizing them is known to be \ER-complete, i.e., as hard as solving a system of polynomial inequalities.
    In this note we describe a simple framework to translate \ER-hardness reductions from the Euclidean plane~$\R^2$ to the hyperbolic plane~$\H^2$.
    We apply our framework to prove that the recognition of unit disk graphs in the hyperbolic plane is also \ER-complete.
\end{abstract}

\section{Introduction}

A graph is a \emph{unit disk graph} if its vertices can be represented by equally sized disk such that two vertices are adjacent if and only if their corresponding disks intersect.
The class of unit disk graphs (\graphclass{UDG}) is a well studied graph class due to its mathematical beauty and its practical relevance, e.g., in the context of sensor networks.

Naturally, unit disk graphs are usually considered in the Euclidean plane~$\R^2$.
However, in the past decade, research on intersection graphs of equally sized disks in the hyperbolic plane~$\H^2$ has gained traction.
This is due to the fact that the hyperbolic geometry is well suited to represent a wider range of graph structures, including complex scale-free networks with heterogeneous degree distributions~\cite{Blasius2018_HRGCliques,Bode2015_HRGComponent,Gugelmann2012_HRGDegSequenceClustering,Krioukov2010_HRG,Muller2019_HRGDiameter}; see Figure~\ref{fig:unit_disk_graphs}.
Most research on such graphs is driven by the network science community studying probabilistic network models, i.e., hyperbolic random graphs.
However, when omitting the probability distribution and looking at hyperbolic unit disk graphs as a graph class, little is known so far.

The class of hyperbolic unit disk graphs (\graphclass{HUDG}) has only been introduced recently~\cite{Blasius2022_StronglyHyperbolic}\footnote{
    We note that there are earlier results on a related family of graph classes parameterized by the disk size by Kisfaludi-Bak~\cite{KisfaludiBak2020_HypIntersec}.
    In a sense, the class \graphclass{HUDG} is the union of all these classes.
    This subtle difference is important when considering asymptotic behavior as it can be desirable to grow the disk size with the graph size; see \cite{Blasius2022_StronglyHyperbolic} for a detailed discussion.
}.
When choosing disks of small radius, the difference between Euclidean and hyperbolic geometry becomes negligible; also see our interactive visualization\footnote{
    \url{https://thobl.github.io/hyperbolic-unit-disk-graph}
} and Figure~\ref{fig:unit_disk_graphs}.

Arguably the most fundamental algorithmic question when it comes to studying graph classes is the computational complexity of the \emph{recognition problem}, i.e., \problemname{Recog(\graphclass{HUDG})} is the problem of testing whether a given graph is part of \graphclass{HUDG}.
In this paper we prove that \problemname{Recog(\graphclass{HUDG})} is \ER-complete.
Containment in~\ER is less obvious than in the Euclidean plane as distances are not (square roots of) a polynomial in hyperbolic geometry.
Nonetheless, containment is easy to show when using the hyperboloid model of the hyperbolic plane.
For \ER-hardness, our proof consists of five steps switching back and forth between Euclidean and hyperbolic variants of problems in a particular way.
Our proof has framework-character in the sense that the first three steps are independent of the specific problem and the remaining steps can probably be translated to other problems.
Thus we believe that this can be a template for proving \ER-hardness for other hyperbolic problems that have an \ER-hard Euclidean counterpart.
For our framework, we in particular use the Beltrami-Klein model of the hyperbolic plane to observe that \SimpleStretchability is equivalent in Euclidean and hyperbolic geometry in the sense that a pseudoline arrangement is stretchable in the Euclidean plane if and only if it is stretchable in the hyperbolic plane.

\begin{figure}[tb]
    \centering
    \includegraphics[width=0.45\linewidth]{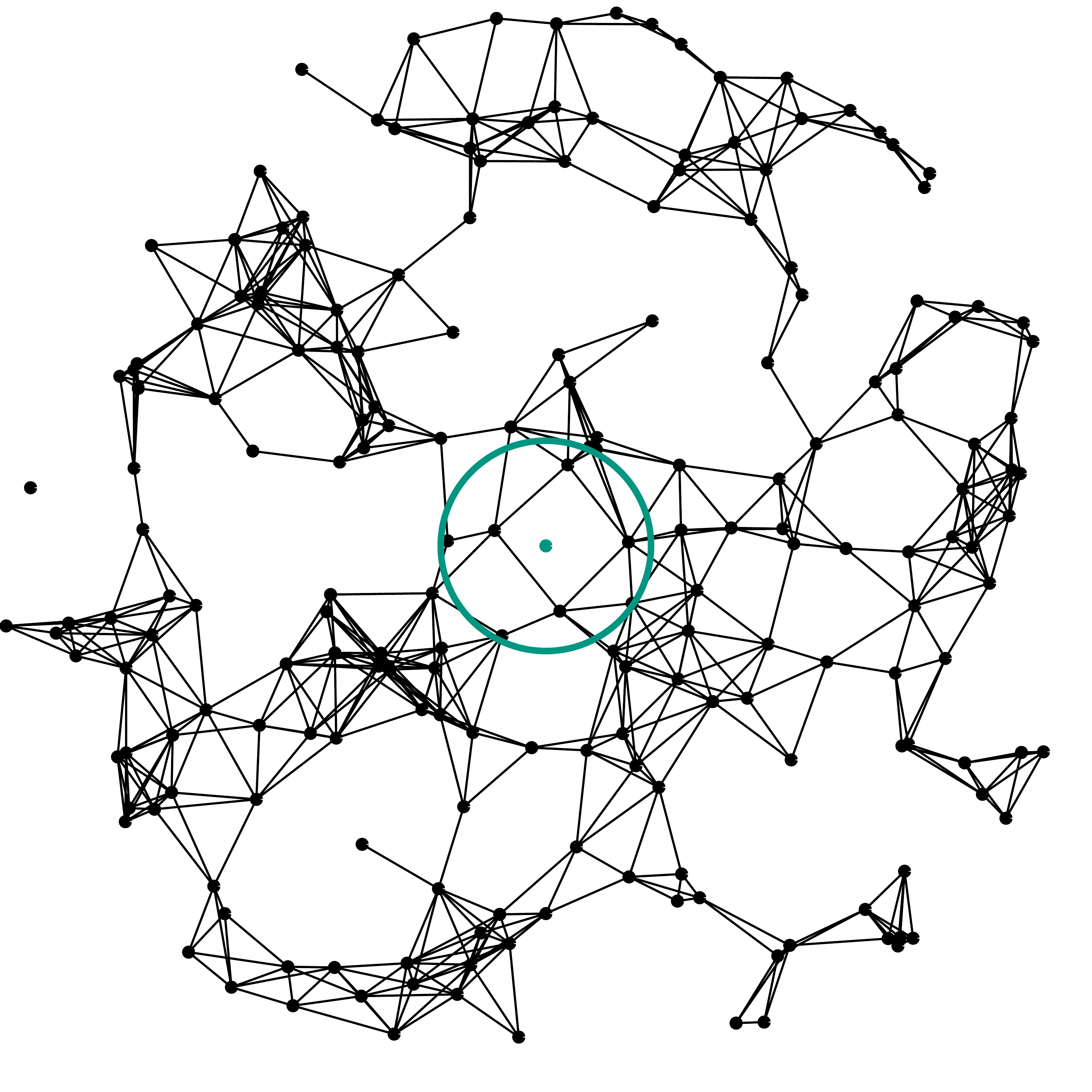}
    \hfill
    \includegraphics[width=0.45\linewidth]{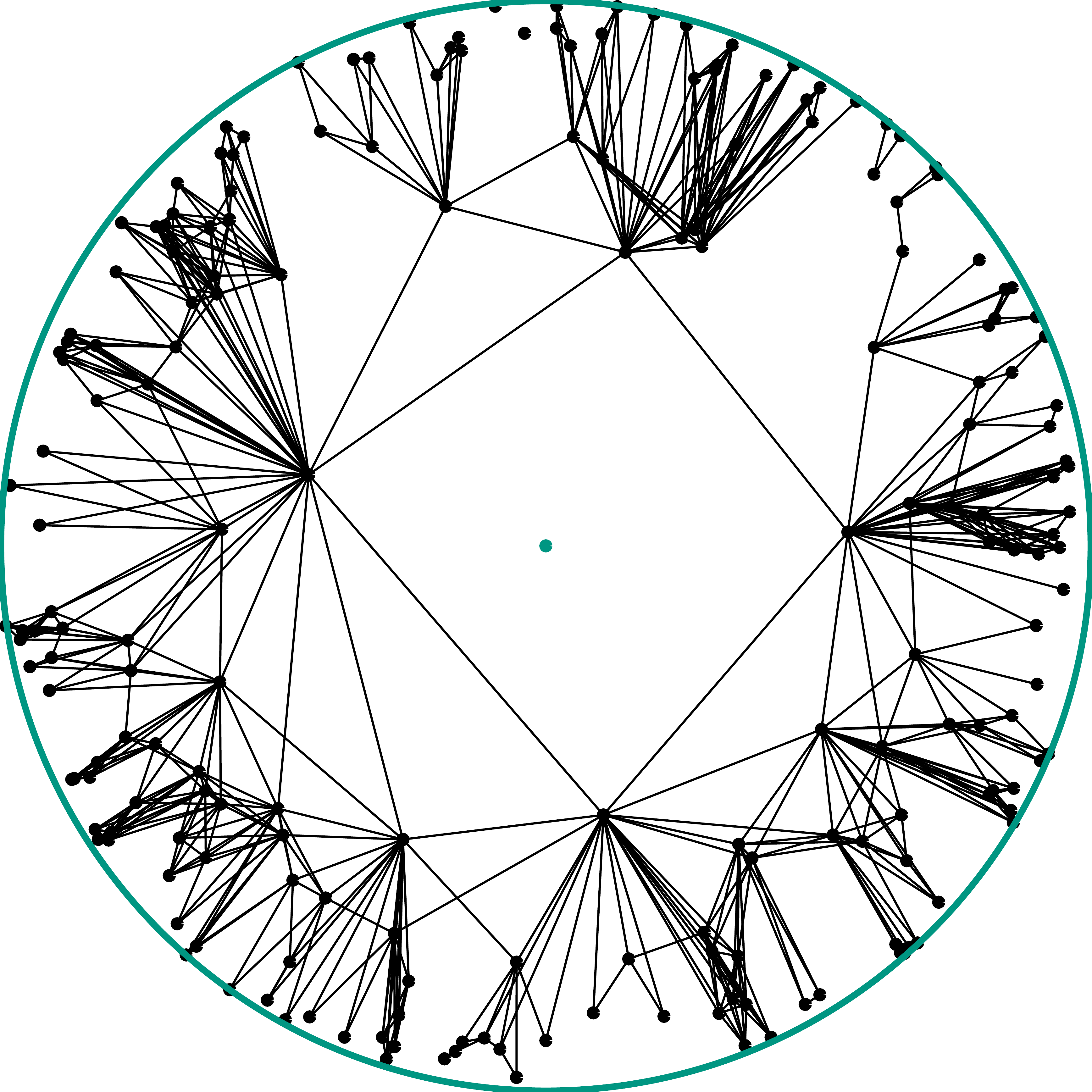}
    \caption{
        Two hyperbolic unit disk graphs.
        Thee green circle indicates the threshold distance below which vertices are connected.
        A small threshold (left) yields structures similar to Euclidean unit disk graphs.
        A large threshold (right) facilitates heterogeneous vertex degrees.
        }
    \label{fig:unit_disk_graphs}
\end{figure}

\subsection{Existential Theory of the Reals}

The \emph{existential theory of the reals} is the set of all true sentences of the form $\exists X \in \R^n : \varphi(X)$, where~$\varphi(X)$ is a quantifier-free formula consisting of polynomial equations and inequalities, e.g.\ $\exists X, Y \in \R : X \cdot Y = 6 \land X+Y = 5$.
We denote the decision problem whether such a sentence is true by~\ETR (which also stands for \enquote{existential theory of the reals}) and define the complexity class~\ER to contain all decision problems that polynomial-time reduce to~\ETR.
It holds $\NP \subseteq \ER \subseteq \PSPACE$~\cite{Canny1988_PSPACE}.
The class~\ER has gained increasing attention in the computational geometry community over the last years as it exactly captures the complexity of many geometry problems like the art gallery problem~\cite{Abrahamsen2022_ArtGallery_JACM}, geometric packing~\cite{Abrahamsen2020_Packing} or the recognition of many classes of geometric intersection graphs~\cite{Kratochvil1994_SegmentIntersectionGraphs,Matousek2014_IntersectionGraphsER,Schaefer2010_GeometryTopology}.

\subsection{Hyperbolic Geometry}

The are several ways to embed the hyperbolic plane into Euclidean space.
In this paper we use the \emph{Beltrami-Klein model} and the \emph{hyperboloid model}.

\subparagraph{Beltrami-Klein Model}
In the Beltrami-Klein model the hyperbolic plane~$\H^2$ is represented by the interior of a unit disk~$D$ in~$\R^2$ (the boundary of~$D$ is not part of the model).
The set of hyperbolic lines is exactly the set of chords of~$D$.
See Figure~\ref{fig:hyperbolic_models} (left).
\begin{figure}[tb]
    \centering
    \includegraphics{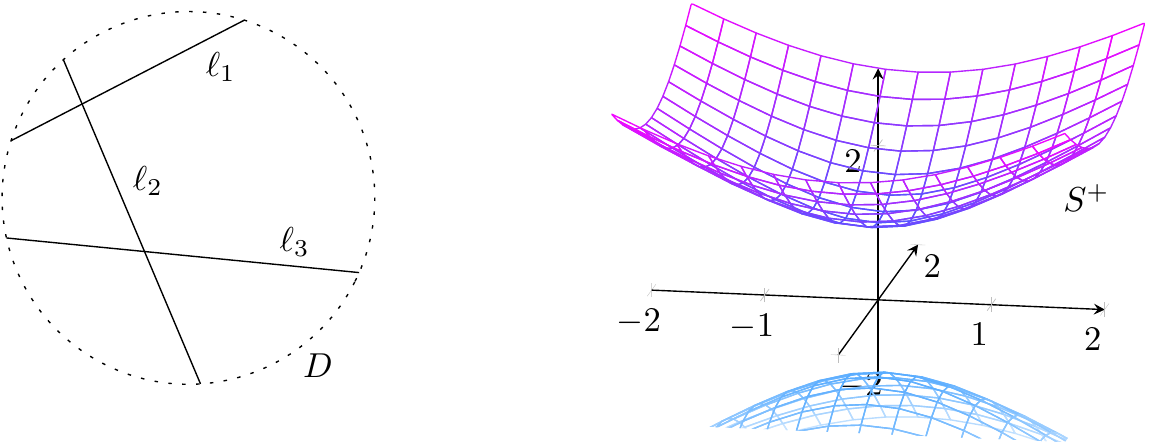}
    \caption{
        Left:~The Beltrami-Klein disk with three hyperbolic lines.
        Right:~The upper sheet~$S^+$ used for the hyperboloid model.
    }
    \label{fig:hyperbolic_models}
\end{figure}

\subparagraph{Hyperboloid Model}
Here the hyperbolic plane gets embedded into~$\R^3$.
The \emph{Minkowski quadratic form} $Q(x,y,z) := z^2 - x^2 - y^2$ defines a two-sheeted hyperboloid $S := \{(x,y,z) \in \R^3 \mid Q(x,y,z) = 1\}$, see Figure~\ref{fig:hyperbolic_models} (right).
The hyperbolic plane is represented by all points on the forward sheet~$S^+$ of~$S$, obtained by additionally requiring that~$z > 0$.
The hyperbolic distance between two points $u,v \in S^+$ is
\[
    d_h(u,v) = \arccosh(B(u,v))
    \qquad
    \text{with}
    \quad
    B(u, v) := u_zv_z - u_xv_x - u_yv_y
\]
where $B(u, v)$ is known as the \emph{Minkowski bilinear form} and $\arccosh(x) := \ln\bigl(x + \sqrt{x^2 - 1}\bigr)$ is the inverse hyperbolic cosine.
Note that the term inside the~$\arccosh(\cdot)$ is a polynomial.

\section{Simple Stretchability in the Euclidean and the Hyperbolic Plane}

An \emph{pseudoline arrangement}~$\A$ is a collection of \emph{pseudolines} ($x$-monotone curves in~$\R^2$) such that each pair of curves intersects at most once.
We assume that each pseudoline~$\ell \in \A$ is oriented and thus divides the plane~$\R^2$ into two open half-planes~$\ell^-$ and~$\ell^+$.
Further, $\A$ partitions the plane into \emph{cells}, i.e., maximal connected components of~$\R^2 \setminus \mathcal{A}$ not on any pseudoline.
We say that~$\A$ is \emph{simple} if any two lines intersect exactly once and no three lines intersect in the same point.
Given a pseudoline arrangement~$\A = \{\ell_1, \ldots, \ell_n\}$ we assign to each~$p \in \R^2$ a \emph{sign vector}~$\sigma(p) = (\sigma_i(p))_{i = 1}^n \in \{-,0,+\}^n$, where
\[
    \sigma_i(p) :=
    \begin{cases}
        - & \text{if } p \in \ell_i^- \\
        0 & \text{if } p \in \ell_i \\
        + & \text{if } p \in \ell_i^+
    \end{cases}
    \text{.}
\]
The \emph{combinatorial description}~$\D$ of~$\A$ is then given by~$\{\sigma(p) \mid p \in \R^2\}$.
We say that~$\A$ realizes~$\D$.
A pseudoline arrangement is \emph{stretchable} if there is a line arrangement with the same combinatorial description.
Not every pseudoline arrangement is stretchable and, given a combinatorial description~$\D$, deciding whether~$\D$ is stretchable is known as the \Stretchability problem (or \SimpleStretchability if~$\D$ is simple).
\Stretchability and \SimpleStretchability are famously known to be \ER-complete~\cite{Mnev1988_UniversalityTheorem,RichterGebert2002_UniversalityTheorem,Shor1991_Stretchability}.
\SimpleStretchability is the starting problem for many \ER-hardness reductions, e.g.~\cite{Bienstock1991_CrossingNumber,Hoffmann2017_PlanarSlopeNumber,Kratochvil1994_SegmentIntersectionGraphs,Schaefer2010_GeometryTopology,Schaefer2013_GraphRealizability}.

Apart from  line arrangements in the Euclidean plane~$\R^2$ one might also consider line arrangements in the hyperbolic plane~$\H^2$.
The main result of this section is that \SimpleStretchability is equivalent in Euclidean and hyperbolic geometry.

\begin{proposition}
    \label{prop:simple_stretchability_equivalence}
    Let~$\D$ be a combinatorial description of a simple pseudoline arrangement.
    Then there is a line arrangement realizing~$\D$ in~$\R^2$ if and only if there is one in~$\H^2$.
\end{proposition}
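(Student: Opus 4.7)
The plan is to use the Beltrami-Klein model, which represents $\H^2$ as the open unit disk $D \subset \R^2$ with hyperbolic lines appearing as open chords of $D$. The key observation is that for any $p \in D$ and any oriented chord $\ell$, the side of $p$ relative to $\ell$ (in $\H^2$) equals the side of $p$ relative to the Euclidean line extending $\ell$ (in $\R^2$). Consequently, for points in $D$ the sign vector with respect to a chord arrangement coincides with that of the corresponding Euclidean line arrangement, so switching viewpoints is just restricting Euclidean lines to $D$ or extending chords to full lines.

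For the forward direction, given a simple Euclidean line arrangement $\A$ realizing $\D$, I would pick one witness point in $\R^2$ for each of the finitely many sign vectors of $\D$ (in particular, one per intersection point of $\A$), and choose an open disk $D' \subset \R^2$ containing all of them. Applying an affine map sending $D'$ to the unit disk preserves the combinatorial description, so we may assume $D' = D$. Viewed in the Beltrami-Klein model, the lines of $\A$ restrict to chords that meet pairwise inside $D$ (because all Euclidean intersection points do), so we obtain a simple hyperbolic line arrangement. Since every sign vector of $\D$ is already witnessed inside $D$ and sign vectors agree there, this hyperbolic arrangement realizes exactly $\D$.

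For the backward direction, take a simple hyperbolic line arrangement realizing $\D$ and, in the Beltrami-Klein disk, extend each chord to the full Euclidean line containing it. Distinct chords lie on distinct Euclidean lines, and because every pair of chords meets in $D$, any two extended lines are non-parallel and meet exactly once; together with no three chords being concurrent, this gives a simple Euclidean line arrangement with some combinatorial description $\D'$. Sign agreement on $D$ immediately yields $\D \subseteq \D'$.

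The main obstacle is showing $\D = \D'$, i.e., that the Euclidean cells and line pieces outside $D$ do not contribute new sign vectors. I would handle this by a cardinality argument: a simple arrangement of $n$ lines realizes the same number of sign vectors in either geometry. This follows by induction on $n$, adding one line at a time: since each new line crosses all $k$ previously added lines inside the ambient domain (the plane, resp.\ the disk $D$), it contributes the same number of new $2$-cells, new open $1$-dimensional line pieces, and new intersection points in both settings. Hence $|\D| = |\D'|$, and together with $\D \subseteq \D'$ this forces $\D = \D'$.
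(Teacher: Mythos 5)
Your proposal is correct and follows the same route as the paper: pass through the Beltrami--Klein model, obtaining the hyperbolic arrangement by restricting the Euclidean lines to a disk containing all crossings, and the Euclidean arrangement by extending the chords to full lines. Your witness-point selection and the face-counting induction merely make explicit what the paper asserts in one line (that no sign vectors are gained or lost because all pairwise intersections already lie inside the disk), so the two arguments are essentially identical.
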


\begin{proof}
    The proof is an easy application of the Beltrami-Klein model of the hyperbolic plane.
    Given a Euclidean line arrangement, we can obtain a hyperbolic line arrangement with the same combinatorial description and vice versa, see Figure~\ref{fig:stretchability_equivalence}.
    \begin{figure}[tb]
        \centering
        \includegraphics{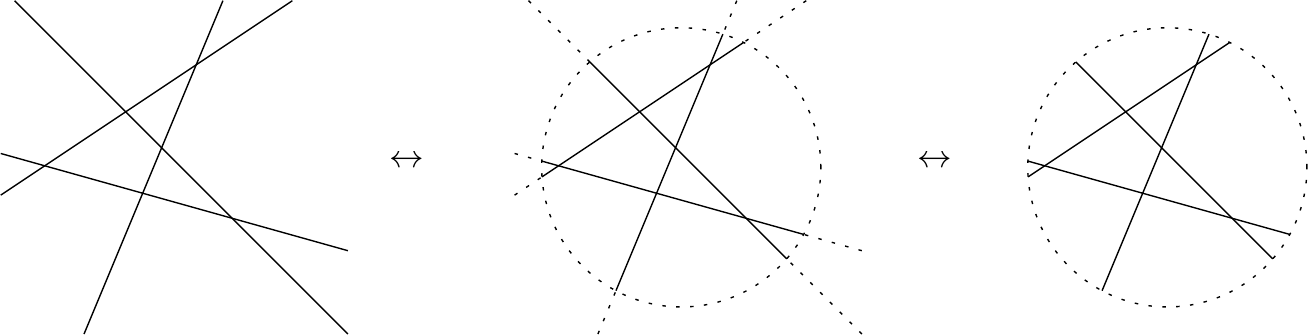}
        \caption{Transforming line arrangements between Euclidean and hyperbolic geometry.}
        \label{fig:stretchability_equivalence}
    \end{figure}

    Let~$\A_\R$ be a simple line arrangement in~$\R^2$ and~$D$ be a disk strictly enclosing all intersections of~$\A_\R$.
    For each line in~$\A_\R$, keep only its part inside~$D$.
    We think of~$D$ as a unit disk and obtain a representation of a hyperbolic line arrangement in the Beltrami-Klein model.
    
    For the other direction let~$\A_\H$ be a simple hyperbolic line arrangement and take a representation inside the Beltrami-Klein disk~$D$, so all hyperbolic lines are chords~$D$.
    Remove~$D$ and extend all chords to lines.
    The resulting Euclidean line arrangement has the same combinatorial description~$\D$ because~$\A_\H$ was simple:
    All possible intersections between two lines were already inside the Beltrami-Klein disk~$D$.
\end{proof}

\begin{remark}
    Proposition~\ref{prop:simple_stretchability_equivalence} is only about simple (pseudo)line arrangements.
    There is no corresponding result for the general (non-simple) \Stretchability problem:
    For example, given three lines~$\ell_1, \ell_2, \ell_3 \subseteq \H^2$, lines~$\ell_2$ and~$\ell_3$ may cross each other while both being parallel to~$\ell_1$.
    However, Proposition~\ref{prop:simple_stretchability_equivalence} may be extended to line arrangements where each pair of lines is still required to cross but multiple lines are allowed to cross at the same point.
\end{remark}

\section{The Framework}
Let~$\Pi_\R$ be a geometric decision problem for which \ER-hardness is shown in Euclidean geometry by a polynomial-time reduction~$f$ from (Euclidean) \SimpleStretchability.
We denote by~$\Pi_\H$ the corresponding decision problem obtained by considering the hyperbolic plane~$\H^2$ instead of the Euclidean plane~$\R^2$.
Our framework below consists of several (hopefully) simple steps that allow us to prove \ER-hardness of~$\Pi_\H$ by using the reduction for~$\Pi_\R$:

\begin{enumerate}
    \item\label{itm:hyperbolic_instance}
    Let~$\D$ be an instance of \SimpleStretchability in~$\H^2$, i.e., a combinatorial description of a simple pseudoline arrangement.
    \item\label{itm:euclidean_instance}
    Use Proposition~\ref{prop:simple_stretchability_equivalence} to consider~$\D$ to be an instance of \SimpleStretchability in~$\R^2$.
    \item\label{itm:euclidean_reduction}
    Use the reduction~$f$ to obtain an instance~$I = f(\D)$ of~$\Pi_R$ equivalent to~$\D$.
    \item\label{itm:yes_instances}
    Prove that every yes-instance of~$\Pi_\R$ is also a yes-instance of~$\Pi_\H$.
    \item\label{itm:no_instances}
    Prove that a line arrangement realizing~$\D$ can be extracted from a realization of~$I$ in~$\H^2$.
\end{enumerate}

Steps~\ref{itm:hyperbolic_instance}, \ref{itm:euclidean_instance} and~\ref{itm:euclidean_reduction} require no work when applying the framework.

Step~\ref{itm:yes_instances} ensures that a stretchable instance~$\D$ yields a yes-instance of~$\Pi_\H$.
This step requires to come up with a new argument but we expect it to be relatively simple because locally~$\R^2$ and~$\H^2$ are very similar.
A promising approach is to scale a Euclidean realization of~$I$ to a tiny area and then interpret the Euclidean polar coordinates as hyperbolic ones.

Step~\ref{itm:no_instances} ensures correctness.
By showing that a line arrangement realizing~$\D$ can be extracted from a realization of~$I$ in~$\H^2$ we show that a no-instance~$\D$ maps to a no-instance of~$\Pi_\H$.
Reduction~$f$ might help us again here (though not as a black box as in Step~\ref{itm:euclidean_reduction}):
If we are lucky, the argument why a realization of~$I$ in~$\R^2$ induces a Euclidean line arrangement realizing~$\D$ only uses the axioms of \emph{absolute geometry} (the common \enquote{subset} of Euclidean and hyperbolic geometry) and works without any adaptations for realizations in~$\H^2$, too.

\section{Recognition of Hyperbolic Unit Disk Graphs}
\label{sec:recog_hyperbolic_unit_disk_graphs}

We apply our framework to prove that~$\problemname{Recog(\graphclass{HUDG})}$, the recognition problem of hyperbolic unit disk graphs, is \ER-hard.
For Euclidean geometry this is shown in~\cite{Kang2012_BallGraphs,McDiarmid2010_UnitDiskGraphs,McDiarmid2013_UnitDiskGraphs}.
Let us note that \graphclass{UDG} and \graphclass{HUDG} are not the same:
For example, a star graph with six leaves is a hyperbolic unit disk graph but not a Euclidean one.

For Step~\ref{itm:hyperbolic_instance} of our framework let~$\D$ be an instance of \SimpleStretchability in~$\H^2$.
We consider it to be an equivalent instance in~$\R^2$ for Step~\ref{itm:euclidean_instance}.
In Step~\ref{itm:euclidean_reduction} we use the reduction~$f$ from the literature proving that \problemname{Recog(\graphclass{UDG})} in~$\R^2$ is \ER-hard~\cite{Kang2012_BallGraphs,McDiarmid2010_UnitDiskGraphs,McDiarmid2013_UnitDiskGraphs}.
We obtain a graph~$G_\D$ that is a Euclidean unit disk graph if and only if~$\D$ is stretchable.

\medskip
Though not required for the framework, let us shortly summarize the reduction~$f$ to construct~$G_\D$ from~$\D$ as given in~\cite{McDiarmid2010_UnitDiskGraphs}.
Let~$n$ be the number of pseudolines~$\ell_1, \ldots, \ell_n$ and $m = 1 + \binom{n+1}{2}$ be the number of cells~$C_1, \ldots, C_m$.
The arrangement described by~$\D$ has exactly this number of cells, because it is simple.
We define~$G_\D$ to be the graph with vertex set $V = A \cup B \cup C$ for $A = \{a_1, \ldots, a_n\}$, $B = \{b_1, \ldots, b_n\}$ and $C = \{c_1, \ldots, c_m\}$.
Here we assume that vertex~$c_i$ corresponds to cell~$C_i$.
For the edges, each of the sets~$A,B,C$ forms a clique.
Further, each~$a_i \in A$ (for~$i \in \{1, \ldots, n\}$) is connected to~$c_j$ (for~$j \in \{1, \ldots, m\}$) if and only if~$C_j \in \ell_i^-$.
Similarly, each~$b_i \in B$ is connected to~$c_j$ if and only if~$C_j \in \ell_i^+$.

\medskip
For Step~\ref{itm:yes_instances} we have to show that every Euclidean unit disk graph is also a hyperbolic unit disk graph.
This has recently been proven by Bl{\"{a}}sius, Friedrich, Katzmann and Stephan:

\begin{lemma}[\cite{Blasius2022_StronglyHyperbolic}]
    \label{lem:udg_subset_hudg}
    Every Euclidean unit disk graph is also a hyperbolic one, so $\graphclass{UDG} \subseteq \graphclass{HUDG}$.
\end{lemma}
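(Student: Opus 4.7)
The plan is the standard one hinted at in the framework section: shrink a Euclidean representation to a tiny neighborhood of the origin in $\H^2$, where hyperbolic geometry is arbitrarily close to Euclidean. Let $G$ admit a Euclidean unit disk representation with centers $p_1, \ldots, p_n \in \R^2$ and disks of radius~$1$. A small generic perturbation (which does not change the represented graph) gives a gap $\alpha > 0$ with $\|p_i - p_j\| \leq 2 - \alpha$ for every edge $ij$ and $\|p_i - p_j\| \geq 2 + \alpha$ for every non-edge.

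For a scale parameter $\epsilon > 0$ I would express each $\epsilon p_i$ in Euclidean polar coordinates $(r_i, \phi_i)$ with $r_i = \epsilon \|p_i\|$, and then reinterpret these as \emph{native hyperbolic} polar coordinates, placing $q_i \in \H^2$ at hyperbolic distance $r_i$ from the origin along polar angle $\phi_i$. By the hyperbolic law of cosines,
\[
    \cosh(d_h(q_i, q_j)) = \cosh(r_i)\cosh(r_j) - \sinh(r_i)\sinh(r_j)\cos(\phi_i - \phi_j).
\]
Substituting $\cosh(x) = 1 + x^2/2 + O(x^4)$ and $\sinh(x) = x + O(x^3)$, and comparing with $\cosh(d_h) = 1 + d_h^2/2 + O(d_h^4)$, yields
\[
    d_h(q_i, q_j)^2 = r_i^2 + r_j^2 - 2 r_i r_j \cos(\phi_i - \phi_j) + O(\epsilon^4) = \|\epsilon p_i - \epsilon p_j\|^2 + O(\epsilon^4),
\]
so $d_h(q_i, q_j) = \epsilon \|p_i - p_j\| + O(\epsilon^3)$.

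Choosing $\epsilon$ small enough that the $O(\epsilon^3)$ error is below $\epsilon \alpha / 2$ uniformly over all pairs, every edge $ij$ satisfies $d_h(q_i, q_j) < 2\epsilon$ and every non-edge satisfies $d_h(q_i, q_j) > 2\epsilon$. Hence hyperbolic disks of radius $\epsilon$ centered at $q_1, \ldots, q_n$ realize $G$ as a hyperbolic unit disk graph.

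The geometric idea---that $\H^2$ is infinitesimally Euclidean at the origin---is standard; the only real work is the Taylor-expansion bookkeeping needed to turn the asymptotic estimate into an explicit threshold for $\epsilon$, so there is no serious obstacle.
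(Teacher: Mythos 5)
Your proposal matches the paper's approach exactly: the paper cites this lemma from Bl\"asius et al.\ and only sketches the argument, namely scaling the Euclidean representation to a tiny area and reinterpreting Euclidean polar coordinates as hyperbolic ones, which is precisely what you do (and your Taylor-expansion bookkeeping via the hyperbolic law of cosines correctly fills in the details of that sketch).
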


As foreshadowed above, the proof scales a Euclidean unit disk intersection representation to a tiny area until the Euclidean and hyperbolic plane are \enquote{similar enough}.
Then the polar coordinates in~$\R^2$ can be used as polar coordinates in~$\H^2$ without changing any adjacencies.

For Step~\ref{itm:no_instances} it remains to prove how a line arrangement realizing~$\D$ in~$\H^2$ can be extracted from a realization of~$G_\D$ in~$\H^2$.

\begin{lemma}[{adapted from~\cite[Lemma~$1$]{McDiarmid2010_UnitDiskGraphs}}]
    \label{lem:extract_line_arrangement}
    Given a realization of~$G_\D$ as the intersection graph of equally sized disks in~$\H^2$.
    Then the line arrangement~$L = \{\ell_1, \ldots, \ell_n\}$ defined by
    \[
        \ell_i := \{p \in \H^2 \mid \mathrm{d}(p, a_i) = \mathrm{d}(p, b_i)\}
    \]
    has combinatorial description~$\D$.
    Here~$\mathrm{d}(\cdot, \cdot)$ denotes the hyperbolic distance.
\end{lemma}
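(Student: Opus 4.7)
The plan is to adapt the Euclidean argument of~\cite[Lemma~1]{McDiarmid2010_UnitDiskGraphs} to $\H^2$, leaning on the fact that its two key geometric ingredients --- the perpendicular bisector of two distinct points is a geodesic, and $n$ lines cut the plane into at most $1 + \binom{n+1}{2}$ cells --- are theorems of absolute geometry, so they hold unchanged in the hyperbolic setting. First I would verify that each $\ell_i$ is a well-defined hyperbolic line. Since all disks in the realization share a common radius~$r$, two vertices with the same center have identical neighbourhoods; but $a_i$ and $b_i$ disagree on every $c_j$ for which $C_j$ lies strictly on one particular side of $\ell_i$ in $\D$ (such $c_j$ exist on both sides by simplicity), so their centers must differ. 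Then $\ell_i = \{p \in \H^2 : \mathrm{d}(p,a_i) = \mathrm{d}(p,b_i)\}$ is the hyperbolic perpendicular bisector of two distinct points, which is a geodesic.

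Next I would match sign vectors cell by cell. Fix $i,j$ and orient $\ell_i$ so that $a_i \in \ell_i^-$. By the definition of the reduction~$f$, exactly one of the edges $a_ic_j, b_ic_j$ is present in $G_\D$, depending on which side of $\ell_i$ the cell $C_j$ lies on in $\D$; in the disk realization this forces exactly one of the distances $\mathrm{d}(a_i,c_j), \mathrm{d}(b_i,c_j)$ to be at most $2r$ and the other to strictly exceed $2r$. In particular the two distances differ, so $c_j \notin \ell_i$, and
\[
    c_j \in \ell_i^- \iff \mathrm{d}(a_i,c_j) < \mathrm{d}(b_i,c_j) \iff a_ic_j \in E(G_\D) \iff C_j \in \ell_i^- \text{ in } \D.
\]
Hence the sign vector of $c_j$ in the arrangement $L$ coincides with the sign vector of $C_j$ in $\D$.

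A final cell-counting step upgrades this pointwise agreement to a combinatorial isomorphism. An incremental argument (each new line adds at most $k+1$ cells to an arrangement with $k$ lines) bounds the number of cells of $L$ by $m = 1 + \binom{n+1}{2}$; but the vertices $c_1,\ldots,c_m$ already witness $m$ pairwise distinct sign vectors, so $L$ must be simple with exactly these $m$ cells and therefore have combinatorial description $\D$. The only non-routine step I anticipate is confirming that the two classical facts invoked above truly belong to absolute geometry --- in particular, that the incremental cell bound does not secretly rely on the parallel postulate; once this is granted, the remainder of the proof is essentially bookkeeping transplanted from~\cite{McDiarmid2010_UnitDiskGraphs}.
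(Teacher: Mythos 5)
Your proposal is correct and takes essentially the same approach as the paper: the paper's proof simply asserts that the argument of McDiarmid and M\"{u}ller's Lemma~1 carries over verbatim once Euclidean distances are replaced by hyperbolic ones, which is exactly the argument you reconstruct (bisectors are geodesics, sign vectors of the $c_j$ match, and the cell count forces the arrangement to be simple with description $\D$). You actually supply more detail than the paper does, in particular by explicitly checking that the ingredients are theorems of absolute geometry.
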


\begin{proof}
    The proof is exactly the same as the proof of Lemma~$1$ in~\cite{McDiarmid2010_UnitDiskGraphs} where McDiarmid and M{\"{u}}ller prove that taking the perpendicular bisectors of the segments between any pair of points~$a_i$ and~$b_i$ yields a Euclidean line arrangement realizing~$\D$.
    Their argument works in~$\H^2$ by just replacing Euclidean distances with hyperbolic distances.
\end{proof}

At this point we proved \ER-hardness of \problemname{Recog(\graphclass{HUDG})}.
To get \ER-completeness we prove \ER-membership next.

\begin{lemma}
    \label{lem:recog_hyperbolic_unit_disk_membership}
    Recognizing hyperbolic unit disk graphs is in \ER.
\end{lemma}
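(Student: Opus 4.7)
The plan is to write a polynomial-size \ETR sentence that is satisfiable if and only if the input graph~$G=(V,E)$ is a hyperbolic unit disk graph, exploiting the hyperboloid model in which the key quantity~$B(u,v)$ is a polynomial in the coordinates.

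First, I would introduce, for every vertex~$v \in V$, three real variables $x_v, y_v, z_v$ that are meant to encode the point $v \mapsto (x_v, y_v, z_v) \in \H^2$ on the forward sheet~$S^+$ of the hyperboloid. Membership on~$S^+$ is expressed by the polynomial constraints $z_v^2 - x_v^2 - y_v^2 = 1$ together with $z_v > 0$, both of which are admissible in~\ETR. Next, I would introduce a single additional variable~$T$ representing the threshold~$\cosh(2r)$, where~$r>0$ is the common disk radius, and require $T > 1$.

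The crucial observation is that, since $\arccosh$ is strictly monotone on $[1, \infty)$, two hyperbolic disks of radius~$r$ centered at $u, v \in S^+$ intersect if and only if $d_h(u,v) \leq 2r$, which by the definition stated in the preliminaries is equivalent to $B(u,v) \leq T$. Hence I can encode the adjacency requirements using purely polynomial inequalities: for every edge $\{u,v\} \in E$ add the constraint $u_zv_z - u_xv_x - u_yv_y \leq T$, and for every non-edge $\{u,v\}$ with $u \neq v$ add the strict inequality $u_zv_z - u_xv_x - u_yv_y > T$. The resulting formula has $3\abs{V}+1$ real variables and $O(\abs{V}^2)$ polynomial (in)equalities of constant degree, hence it can be constructed in polynomial time.

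For correctness I would argue both directions separately. If~$G$ is a hyperbolic unit disk graph, then any disk representation directly yields coordinates on~$S^+$ together with a value of~$T = \cosh(2r)$ satisfying all constraints. Conversely, given any satisfying assignment, the points~$(x_v, y_v, z_v)$ lie on~$S^+$, and the strict versus non-strict threshold on~$B$ translates back, via the monotonicity of~$\arccosh$, to the correct threshold on~$d_h$; choosing~$r = \tfrac{1}{2}\arccosh(T)$ thus recovers a valid unit disk representation. There is no genuine obstacle here; the only subtle point is realizing that the non-polynomial~$\arccosh$ can be side-stepped entirely by working with~$B$ and~$T$ instead of with distances and the radius directly.
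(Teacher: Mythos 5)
Your proposal is correct and rests on exactly the same key idea as the paper: pass to the hyperboloid model so that the Minkowski bilinear form $B(u,v)$ is a polynomial in the coordinates, and eliminate the non-polynomial $\arccosh$ by comparing $B$ against a single threshold variable using monotonicity. The only difference is the wrapper — you write an explicit \ETR sentence directly, whereas the paper invokes the real-RAM verification characterization of \ER due to Erickson, van der Hoog and Miltzow and describes a verifier that computes $\max_{uv\in E}B(u,v)$ and $\min_{uv\notin E}B(u,v)$ — but this is a cosmetic rather than substantive distinction.
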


\begin{proof}
    By a result from Erickson, van der Hoog and Miltzow we can prove \ER-membership by describing a polynomial-time verification algorithm for a real RAM machine\footnote{
        The real RAM extends the classical word RAM by additional registers that contain real numbers (with arbitrary precision).
        The basic arithmetic operations $+$, $-$, $\cdot$ and $/$ are supported in constant time.
        However, arbitrary analytic functions (like $\arccosh$) are not supported.
        See~\cite{Erickson2022_SmoothingTheGap} for a formal definition.
    }~\cite{Erickson2022_SmoothingTheGap}.
    Given a graph~$G = (V,E)$ and for each vertex~$v \in V$ a point $(v_x, v_y, v_z)$ in the hyperboloid model of the hyperbolic plane representing the center of an equal-radius disk.
    Compute $d_{\text{adj}} := \max_{uv \in E} B(u,v)$ and $d_{\text{non-adj}} := \min_{uv \not\in E} B(u,v)$ where~$B(\cdot,\cdot)$ is the Minkowski bilinear form.
    $B(\cdot,\cdot)$ is a polynomial, so it is computable on a real RAM.
    We can think of $d_{\text{adj}}$ and $d_{\text{non-adj}}$ as distances in hyperbolic space (actually they are the hyperbolic cosine of a distance), but since~$\arccosh$ is a monotone function, this view is justified.
    Now if and only if $d_{\text{adj}} < d_{\text{non-adj}}$, then there is a radius~$r$ such that~$G$ is a hyperbolic unit disk graph with radius~$r$ (choose~$r$ such that $d_{\text{adj}} \leq \frac{\cosh(r)}{2} < d_{\text{non-adj}}$).
    The algorithm takes~$O(\abs{V}^2)$ time. This is polynomial in the input size, proving \ER-membership.
\end{proof}

We conclude with the following theorem:

\begin{theorem}
    \label{thm:recog_hyperbolic_unit_disk_er_complete}
    Recognizing hyperbolic unit disk graphs is \ER-complete.
\end{theorem}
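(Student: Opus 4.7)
The plan is to assemble the $\ER$-completeness result by combining the membership statement of Lemma~\ref{lem:recog_hyperbolic_unit_disk_membership} with an $\ER$-hardness reduction produced by the five-step framework of Section~3. Membership is already fully handled, so the only substantive task is to articulate the hardness direction as a polynomial-time many-one reduction from \SimpleStretchability.

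For the hardness half, I would instantiate the framework as described at the top of Section~\ref{sec:recog_hyperbolic_unit_disk_graphs}. Start from a combinatorial description $\D$ of a simple pseudoline arrangement, viewed as an instance of \SimpleStretchability in~$\H^2$ (Step~\ref{itm:hyperbolic_instance}). By Proposition~\ref{prop:simple_stretchability_equivalence}, $\D$ is equivalently an instance of Euclidean \SimpleStretchability (Step~\ref{itm:euclidean_instance}). Feeding this into the McDiarmid--M\"uller reduction $f$ of~\cite{McDiarmid2010_UnitDiskGraphs} yields in polynomial time a graph $G_\D$ that is a Euclidean unit disk graph if and only if $\D$ is stretchable in~$\R^2$ (Step~\ref{itm:euclidean_reduction}). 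Since $f$ is polynomial-time computable and the reinterpretation in Step~\ref{itm:euclidean_instance} is the identity on descriptions, the composite map $\D \mapsto G_\D$ is a polynomial-time reduction, and it remains only to verify that it preserves the yes/no answer with respect to hyperbolic unit disk realizations.

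That preservation is exactly the content of Lemmas~\ref{lem:udg_subset_hudg} and~\ref{lem:extract_line_arrangement}. If $\D$ is stretchable (in $\R^2$, equivalently in $\H^2$), correctness of $f$ gives that $G_\D$ is a Euclidean unit disk graph, and Lemma~\ref{lem:udg_subset_hudg} then upgrades this to a hyperbolic unit disk representation. Conversely, if $G_\D$ admits a representation as an intersection graph of equally sized disks in~$\H^2$, Lemma~\ref{lem:extract_line_arrangement} extracts a hyperbolic line arrangement with combinatorial description~$\D$, so $\D$ is stretchable in~$\H^2$ and hence, by Proposition~\ref{prop:simple_stretchability_equivalence}, in~$\R^2$ as well. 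Chained with Lemma~\ref{lem:recog_hyperbolic_unit_disk_membership}, this establishes $\ER$-completeness.

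The main obstacle is not really located in this final theorem at all: every piece of technical content has been packaged into the preceding lemmas (the Beltrami-Klein transfer in Proposition~\ref{prop:simple_stretchability_equivalence}, the containment $\graphclass{UDG} \subseteq \graphclass{HUDG}$ in Lemma~\ref{lem:udg_subset_hudg}, and the hyperbolic adaptation of the perpendicular-bisector argument in Lemma~\ref{lem:extract_line_arrangement}). What remains is a short glue argument that chains these results together in the order dictated by the framework; I would write the proof of the theorem in essentially one sentence, citing Lemmas~\ref{lem:udg_subset_hudg}, \ref{lem:extract_line_arrangement} and~\ref{lem:recog_hyperbolic_unit_disk_membership}.
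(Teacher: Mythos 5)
Your proposal is correct and matches the paper exactly: the paper also treats the theorem as a one-line consequence of the framework instantiation (Proposition~\ref{prop:simple_stretchability_equivalence}, the McDiarmid--M\"uller reduction, Lemma~\ref{lem:udg_subset_hudg} for yes-instances, Lemma~\ref{lem:extract_line_arrangement} for no-instances) combined with the membership result of Lemma~\ref{lem:recog_hyperbolic_unit_disk_membership}. All the technical content lives in the preceding lemmas, just as you observe.
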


\section{Conclusion and Outlook}

We presented a simple framework that allows us to translate \ER-hardness reductions for geometric decision problems in~$\R^2$ into reductions for their counterparts~$\H^2$.
As an application we proved that \problemname{Recog(\graphclass{HUDG})} is \ER-complete.
Promising candidates for further applications of our framework are the recognition of unit ball graphs (i.e., a generalization of our result to higher dimensions) as already done in~$\R^d$ in~\cite{Kang2012_BallGraphs} or \problemname{Recog(\graphclass{CONV})}, the recognition problem for intersection graphs of convex sets (Euclidean reduction is in~\cite{Schaefer2010_GeometryTopology}).

Technically, the framework also works for the recognition problems \problemname{Recog(\graphclass{HSEG})} and \problemname{Recog(\graphclass{HDISK})}, where \graphclass{(H)SEG} and \graphclass{(H)DISK} denote the classes of intersection graphs of (hyperbolic) segments and disks, respectively (Euclidean reductions are in~\cite{Kang2012_BallGraphs,Kratochvil1994_SegmentIntersectionGraphs,Matousek2014_IntersectionGraphsER,McDiarmid2013_UnitDiskGraphs,Schaefer2010_GeometryTopology}).
However, these are not really interesting as $\graphclass{SEG} = \graphclass{HSEG}$ (easy to see in the Beltrami-Klein model) and $\graphclass{DISK} = \graphclass{HDISK}$ (easy to see in the \Poincare model, not considered here).
Therefore \ER-completeness for \problemname{Recog(\graphclass{HSEG})} and \problemname{Recog(\graphclass{HDISK})} follows directly from the Euclidean cases.
Other interesting problems to consider in~$\H^2$ are linkage realizability~\cite{Abel2016_PlaneRigidity,Schaefer2013_GraphRealizability}, simultaneous graph embeddings~\cite{Cardinal2015_SimultaneousEmbedding,Kyncl2011_AbstractTopologicalGraphs} or RAC-drawings~\cite{Schaefer2021_RAC}.

\subparagraph{Acknowledgements}
We thank Torsten Ueckerdt for discussion on proving the membership of \problemname{Recog(\graphclass{HUDG})} in \ER.

\bibliographystyle{plainurl}
\bibliography{references}

\end{document}